\newtheorem{theorem}{Theorem}
\newtheorem{corollary}{Corollary} 
\theoremstyle{definition}
\begin{document}

\title{Time Is Of The Essence:\\ Incorporating Phase-Type Distributed Delays And Dwell Times Into ODE Models}

\author{
    Hurtado, Paul J. \\
	University of Nevada, Reno \\
	ORCID: 0000-0002-8499-5986 \\
	\texttt{phurtado@unr.edu} \\
	\And
	Richards, Cameron\\
	University of Nevada, Reno\\ 
	ORCID: 0000-0002-1620-9998
}


\maketitle

\begin{abstract} Ordinary differential equations (ODE) models have a wide variety of applications in the fields of mathematics, statistics, and the sciences. Though they are widely used, these models are sometimes viewed as inflexible with respect to the incorporation of time delays. The Generalized Linear Chain Trick (GLCT) serves as a way for modelers to incorporate much more flexible delay or dwell time distribution assumptions than the usual exponential and Erlang distributions. In this paper we demonstrate how the GLCT can be used to generate new ODE models by generalizing or approximating existing models to yield much more general ODEs with phase-type distributed delays or dwell times. 	 	
\end{abstract}

\clearpage
\tableofcontents
\clearpage

\section{Introduction} \label{sec:intro} 

Ordinary differential equations (ODE) models are widely used in the sciences \citep[e.g., see][]{Strogatz2014, Murdoch2003, BeuterGlassMackeyTitcombe2003,Diekmann2000, AndersonMay1992}, but are often limited by the difficulty of incorporating time delays. Such delays are more easily incorporated into models using other mathematical frameworks. For example, delays can be modeled using integral equations or integro-differential equations to incorporate distributed delays into dynamic models, or using delay differential equations (DDEs) to incorporate fixed delays \citep{Wearing2005,Feng2016,Feng2007,Ruan2006,Lin2018,Roussel1996}. In an ODE framework, the linear chain trick has long been used to incorporate exponential and Erlang distributed delays \citep{Metz1991,Metz1986,Nisbet1989,Smith2010,Diekmann2017}, and recently this approach has been generalized to a much broader family of delay or dwell time distributions \citep{Hurtado2019}. 

This generalized linear chain trick \citep[GLCT;][]{Hurtado2019} allows modelers to incorporate delay and dwell time distributions that include  (but are not limited to) the phase-type family of univariate probability distributions \citep{Bladt2017ch3,Horvath2016,Reinecke2012a,BuTools2}, which include hyperexponential, hypoexponential, Coxian, and the previously mentioned exponential and Erlang distributions. The phase-type family of distributions is the set of all possible absorption time distributions for continuous time Markov chains (CTMCs) with one or more transient states and a single absorbing state. The GLCT also permits the use of similar time-varying versions of such distributions \citep{Hurtado2019}. Together, the GLCT and the tools and techniques associated with phase-type distributions enable modelers to draw from a richer set of ODE model assumptions when constructing new models, and provide a framework for more clearly seeing how underlying stochastic model assumptions are reflected in the structure of mean field ODE models.

In this paper, we illustrate how to use the GLCT to formulate new ODE models that explicitly incorporate phase-type distributed delays. We do this by generalizing multiple different models from the literature. These include ODE models with no explicit delays, DDE models, and distributed delay models in the form of integro-differential equations. In section \ref{sec:glct} we review the GLCT framework for phase-type distributions, as well as the standard Linear Chain Trick (LCT). We then generalize multiple models starting in section \ref{sec:TGI}, where we generalize a tumor growth inhibition model by \citet{Simeoni2004}. In section \ref{sec:opioid} we then generalize a prescription opioid epidemic model by \citet{Battista2019}, then a within-host immune-pathogen model by \citet{Hurtado2012} in section \ref{sec:inhost}, and finally a model of cell-to-cell spread of HIV by \citet{Culshaw2003} in section \ref{sec:HIV}.

\subsection{Generalized Linear Chain Trick}\label{sec:glct}
For our purposes below, we here provide a statement of the Generalized Linear Chain Trick (GLCT) for phase-type distributions. More generally, the GLCT in \citet{Hurtado2019} extends the version below to also include time-varying parameters (analogous to extending homogeneous Poisson process rates to the time-varying rates of inhomogeneous Poisson processes). 

The (continuous) phase-type distributions are a family of matrix exponential distributions that can otherwise be thought of as the absorption time distributions for CTMCs with a single absorbing state. They are parameterized in terms of a vector \textbf{v}, which is the initial distribution vector across the set of transient states\footnote{The full initial distribution vector would be [\textbf{v}, $v_0$].}, and the transient state block \textbf{M} of the transition rate matrix, which together define the corresponding CTMC. The general form for the density function, cumulative distribution function, and moments of a phase-type distribution are  \begin{subequations} \begin{align}
	f(t) =&\; \text{\textbf{v}}\,e^{\mathbf{M}t}\,(-\mathbf{M}\mathbf{1}) \\
	F(t) =&\; 1 - \text{\textbf{v}}\,e^{\mathbf{M}t}\,\mathbf{1} \\
	E(T^j)=&\; j!\,\text{\textbf{v}}\,(-\mathbf{M})^{-j}\mathbf{1} 
\end{align} \end{subequations} where $\mathbf{1}$ is an appropriately long column vector of ones.

For more on phase-type distributions, see \citet{HurtadoRichards2020, Hurtado2019, Bladt2017ch3, BuTools2, Horvath2016, Reinecke2012a}. 

\begin{theorem}[\textbf{GLCT for phase-type distributions}] \label{th:glct}
	Assume individuals enter a state (call it state X) at rate $\mathcal{I}(t)\in\mathbb{R}$ and that the distribution of time spent in state X follows a continuous phase-type distribution given by the length $k$ initial probability vector $\textnormal{\textbf{v}}$ and the $k\times k$ matrix $\mathbf{M}$. Then partitioning X into $k$ sub-states X$_i$, and denoting the corresponding amount of individuals in state X$_i$ at time $t$ by $x_i(t)$, then the mean field equations for these sub-states $x_i$ are given by
	\begin{equation} \frac{d}{dt}\mathbf{x}(t)=\textnormal{\textbf{v}}\,\mathcal{I}(t) + \mathbf{M}^\text{T}\,\mathbf{x}(t) \label{eq:GLCT}\end{equation}
	where the rate of individuals leaving each of these sub-states of X is given by the vector $(-\mathbf{M\,1})\circ\mathbf{x}$, where $\circ$ is the Hadamard (element-wise) product of the two vectors, and thus the total rate of individuals leaving state X is given by the sum of those terms, i.e., $(-\mathbf{M\,1})^\text{T}\mathbf{x}=-\mathbf{1}^\text{T}\mathbf{M}^\text{T}\mathbf{x}$.
\end{theorem}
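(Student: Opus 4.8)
The plan is to make explicit the continuous-time Markov chain (CTMC) that underlies the phase-type assumption and to read the mean field equations directly off its flow balance. Since the dwell time in state X is phase-type with parameters $(\mathbf{v},\mathbf{M})$, I would identify the $k$ sub-states $X_1,\dots,X_k$ with the $k$ transient phases of that CTMC, so that an individual entering X starts in phase $i$ with probability $v_i$ and then moves among the phases until it is absorbed (i.e., leaves X). Under this identification the off-diagonal entry $M_{ij}$ ($i\neq j$) is the transition rate from phase $i$ to phase $j$, each diagonal entry $M_{ii}$ is the negative of the total rate of leaving phase $i$, and the exit (absorption) rate from phase $i$ is the $i$-th component of $-\mathbf{M}\mathbf{1}$, since the full generator of the CTMC has zero row sums.

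With this dictionary in hand, the first concrete step is to write a mass-balance equation for each $x_i$. Into phase $i$ there is an inflow $v_i\,\mathcal{I}(t)$ from newly arriving individuals plus an inflow $\sum_{j\neq i} M_{ji}\,x_j$ from the other phases; out of phase $i$ there is an outflow $-M_{ii}\,x_i$ carrying away everything that leaves phase $i$, whether to another phase or out of X entirely. Adding these and noting that the diagonal term supplies exactly the missing $j=i$ summand gives $\dot{x}_i = v_i\,\mathcal{I}(t) + \sum_j M_{ji}\,x_j$, which in vector form is precisely $\dot{\mathbf{x}} = \mathbf{v}\,\mathcal{I}(t) + \mathbf{M}^{\text{T}}\mathbf{x}$, i.e.\ equation \eqref{eq:GLCT}. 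The per-substate exit rate is then the absorption rate from phase $i$ times its occupancy, $(-\mathbf{M}\mathbf{1})_i\,x_i$, that is $(-\mathbf{M}\mathbf{1})\circ\mathbf{x}$, and summing over $i$ yields the total exit rate $(-\mathbf{M}\mathbf{1})^{\text{T}}\mathbf{x} = -\mathbf{1}^{\text{T}}\mathbf{M}^{\text{T}}\mathbf{x}$.

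The step I regard as the real content—and the main obstacle—is verifying that this bookkeeping actually reproduces the prescribed phase-type dwell time, rather than merely being an internally consistent accounting. To check this I would track a single unit cohort, taking $\mathcal{I}\equiv 0$ with initial condition $\mathbf{x}(0)=\mathbf{v}$, so that $\mathbf{x}(t) = e^{\mathbf{M}^{\text{T}} t}\mathbf{v}$. The fraction of the cohort still residing in X at time $t$ is $\mathbf{1}^{\text{T}}\mathbf{x}(t) = \mathbf{v}\,e^{\mathbf{M} t}\mathbf{1}$, which is exactly the phase-type survival function $1 - F(t)$, while the total rate of leaving is $-\mathbf{1}^{\text{T}}\mathbf{M}^{\text{T}}\mathbf{x}(t) = \mathbf{v}\,e^{\mathbf{M} t}(-\mathbf{M}\mathbf{1}) = f(t)$, matching the phase-type density from the formulas preceding the theorem. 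The transpose is the one easy-to-slip point and must be handled with care: $\mathbf{M}$ acts on the right of the probability row vector in the Kolmogorov forward equation, so when the unknown $\mathbf{x}$ is written as a column vector the governing matrix is $\mathbf{M}^{\text{T}}$. This consistency computation both justifies the identification of the sub-states with the CTMC phases and confirms that no probability mass is lost or double-counted across the diagonal of $\mathbf{M}$, completing the argument.
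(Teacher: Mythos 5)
Your argument is correct, but there is no in-paper proof to compare it against: this paper states Theorem \ref{th:glct} without proof, as a review of a result established in \citet{Hurtado2019}, and the only proof the paper itself contains is that of Corollary \ref{th:lct}, which follows by substituting the specific $(\textnormal{\textbf{v}},\mathbf{M})$ of eq. \eqref{eq:vM} into eq. \eqref{eq:GLCT}. What you have written, then, supplies the derivation the paper delegates to its reference, and it is sound: the dictionary between sub-states and transient phases (off-diagonal $M_{ij}$ as inter-phase rates, $-M_{ii}$ as total per-capita outflow, $(-\mathbf{M}\mathbf{1})_i$ as the exit rate via the zero-row-sum property of the full generator) is right; the flow balance correctly lets the diagonal term absorb both inter-phase and exit outflow, giving $\dot{x}_i = v_i\,\mathcal{I}(t) + \sum_j M_{ji}x_j$; and the transpose bookkeeping (forward equation acts on row vectors, hence $\mathbf{M}^\text{T}$ on columns) is handled carefully. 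Your cohort check is also the step carrying the real probabilistic content: with $\mathcal{I}\equiv 0$ and $\mathbf{x}(0)=\textnormal{\textbf{v}}$ one gets $\mathbf{1}^\text{T}\mathbf{x}(t)=\textnormal{\textbf{v}}\,e^{\mathbf{M}t}\mathbf{1}=1-F(t)$ and total exit rate $\textnormal{\textbf{v}}\,e^{\mathbf{M}t}(-\mathbf{M}\mathbf{1})=f(t)$, matching the phase-type formulas preceding the theorem, so the ODE bookkeeping demonstrably reproduces the prescribed dwell-time law rather than merely being self-consistent. The one caveat: since neither the theorem nor the paper formally defines the underlying stochastic model or the term ``mean field,'' your flow-balance-plus-verification argument is a derivation at the same level of rigor as the paper's own usage; a fully rigorous proof would instead start from mean-field integral equations of the form $x_i(t)=\int_0^t \mathcal{I}(s)\,\big(\textnormal{\textbf{v}}\,e^{\mathbf{M}(t-s)}\big)_i\,ds$ (plus initial-cohort terms) and differentiate to obtain eq. \eqref{eq:GLCT}, which is the route of the cited source. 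This is a difference in framing, not a gap that invalidates your argument.
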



The standard Linear Chain Trick (LCT) is well known (see \citet{Hurtado2019} and references therein) and is a special case of Theorem \ref{th:glct} above. However, it is usually stated without the above matrix-vector notation. The following is a formal statement of the standard LCT, but here we have slightly generalized it to include generalized Erlang distributions (i.e., the sum of $k$ independent exponentially distributed random variables, each with potentially different rates $r_i$) as this only changes a few subscripts in the mean field equations. See \citet{Smith2010} for a similar statement of the standard LCT (for Erlang distributions).

\begin{corollary}[\textbf{Linear Chain Trick for Erlang and Hypoexponential Distributions}] \label{th:lct}
	~\\
	Consider the GLCT above. Assume that the dwell-time distribution is a generalized Erlang (hypoexponential) distribution with rates $\mathbf{r}=[r_1,\;r_2,\;\ldots\;,r_k]^\text{T}$, where $r_i>0$, or an Erlang distribution with rate $r$ (all rates $r_i=r$) and shape $k$ (or if written in terms of shape $k$ and mean $\tau=k/r$, use $r=k/\tau$). Then the corresponding mean field equations are 
		\begin{equation} \label{eq:LCT}\begin{split} 
		\frac{dx_1}{dt} =&\; \mathcal{I}(t) - r_1\,x_1 \\
		\frac{dx_2}{dt} =&\;  r_1\,x_1 - r_2\,x_2 \\
		&\vdots \\
		\frac{dx_k}{dt} =&\; r_{k-1}\,x_{k-1} - r_{k}\,x_{k}.
		\end{split}\end{equation}	
\end{corollary}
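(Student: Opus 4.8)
The plan is to realize the (generalized) Erlang / hypoexponential distribution as one specific phase-type distribution and then invoke Theorem~\ref{th:glct} directly. First I would write down the CTMC whose absorption time is a sum of $k$ independent exponential random variables with rates $r_1,\ldots,r_k$: the chain starts deterministically in sub-state $1$, passes sequentially through states $1\to 2\to\cdots\to k$, and is then absorbed. This corresponds to the initial probability vector $\mathbf{v}=[1,0,\ldots,0]$ and the upper-bidiagonal sub-generator
\[
\mathbf{M}=\begin{pmatrix} -r_1 & r_1 & & \\ & -r_2 & r_2 & \\ & & \ddots & \ddots \\ & & & -r_k \end{pmatrix},
\]
with diagonal entries $M_{ii}=-r_i$, superdiagonal entries $M_{i,i+1}=r_i$, and all remaining entries zero. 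The Erlang case is then recovered by setting $r_i=r$ for all $i$ (equivalently $r=k/\tau$).

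Second, I would confirm that this $(\mathbf{v},\mathbf{M})$ pair indeed represents the stated distribution. The cleanest justification is probabilistic: because the holding time in sub-state $i$ is exponential with rate $r_i$ and the only transition out of state $i$ leads into state $i+1$ (until absorption out of state $k$), the total time to absorption equals the sum $\sum_{i=1}^k E_i$ of independent $\mathrm{Exp}(r_i)$ holding times, which is the hypoexponential (generalized Erlang) distribution by definition. One can double-check consistency by noting the exit-rate vector $-\mathbf{M}\mathbf{1}=[0,\ldots,0,r_k]^\text{T}$, so individuals leave the chain only from state $k$ at rate $r_k$, exactly as expected. Alternatively one could substitute $\mathbf{M}$ into the density formula $f(t)=\mathbf{v}\,e^{\mathbf{M}t}(-\mathbf{M}\mathbf{1})$, though the probabilistic argument avoids exponentiating the matrix.

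Third, with the representation fixed, I would simply substitute $\mathbf{v}$ and $\mathbf{M}$ into the GLCT mean field equation \eqref{eq:GLCT}. The input term $\mathbf{v}\,\mathcal{I}(t)=[\mathcal{I}(t),0,\ldots,0]^\text{T}$ feeds recruitment only into the first sub-state. The transport term requires the transpose: since $\mathbf{M}$ is upper bidiagonal, $\mathbf{M}^\text{T}$ is lower bidiagonal with $(\mathbf{M}^\text{T})_{ii}=-r_i$ and $(\mathbf{M}^\text{T})_{i,i-1}=r_{i-1}$, so that $(\mathbf{M}^\text{T}\mathbf{x})_i=r_{i-1}x_{i-1}-r_i x_i$ (with the convention $r_0 x_0=0$). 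Reading off components then yields $\frac{dx_1}{dt}=\mathcal{I}(t)-r_1 x_1$ and $\frac{dx_i}{dt}=r_{i-1}x_{i-1}-r_i x_i$ for $i\ge 2$, which is exactly system \eqref{eq:LCT}.

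The computation is routine once the representation is in hand, so the only real obstacle is bookkeeping: correctly placing the flow rates in $\mathbf{M}$ (flow out of state $i$ appears as $M_{i,i+1}$, not $M_{i+1,i}$) and then tracking how the transpose in \eqref{eq:GLCT} moves those rates onto the subdiagonal, so that the rate $r_{i-1}$ leaving state $i-1$ reappears as the gain term in the equation for $x_i$. Getting the index shift right at the boundaries (state $1$ has no inflow rate, state $k$ has no onward sub-state) is where a sign or subscript error would most easily slip in.
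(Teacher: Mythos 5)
Your proof is correct and follows exactly the paper's own route: you identify the same phase-type representation $(\mathbf{v},\mathbf{M})$ of the hypoexponential distribution (initial vector $[1,0,\ldots,0]^\text{T}$ and upper-bidiagonal sub-generator) and substitute it into eq.~\eqref{eq:GLCT}, which is precisely what the paper's proof does via eqs.~\eqref{eq:vM}. Your additional probabilistic justification of why this $(\mathbf{v},\mathbf{M})$ pair realizes the sum of independent exponentials, and your explicit component-wise bookkeeping of the transpose, are elaborations the paper leaves implicit but introduce no deviation in method.
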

\begin{proof}
	The phase-type distribution formulation of the generalized Erlang distribution described above is given by \begin{equation} \label{eq:vM} \textnormal{\textbf{v}} = \begin{bmatrix} 1 \\ 0 \\ \vdots \\ 0  \end{bmatrix}  \qquad \text{and} \qquad \mathbf{M} = \begin{bmatrix} -r_1 & r_1 & 0 & \cdots & 0 \\
	0 & -r_2 & r_2  & \ddots & 0 \\
	\vdots & \ddots & \ddots  & \ddots & \ddots \\
	0 & 0 & \ddots &  -r_{k-1} & r_{k-1} \\
	0 & 0 & \cdots &  0 & -r_k
	\end{bmatrix}. \end{equation} Substituting these into eq. \eqref{eq:GLCT} yields the desired result. If all $r_i=r$ then the phase-type distribution is an Erlang distribution with rate $r$ and shape $k$ (mean $k/r$ and coefficient of variation $1/\sqrt{k}$).
\end{proof}

\subsubsection{Using The GLCT To Derive New ODEs With Phase-Type Distributed Delays}

As we will show below, it is relatively straightforward to derive, from an existing ODE model with exponential or Erlang delays, a more general ODE model with phase-type distributed delays. Such models can also be similarly derived from existing integral equations \citep{Hurtado2019}. In this section, we describe how modelers can also use the GLCT to derive ODEs with phase-type distributed delays from existing ODEs and DDEs. 

One important application of the LCT is that it can be used to approximate delay differential equations (DDEs) with ODEs, as discussed in \citet{Smith2010}. To do this, DDEs can be thought of in the context of distributed delay models as the result of assuming a delay distribution with a point mass at $\tau$. This distribution can be approximated by an Erlang distribution with mean $\tau$ and a very small coefficient of variation (i.e., a large shape parameter), which yields an ODE via the LCT. That Erlang distribution assumption can then be replaced with a more general phase-type distribution assumption to yield ODEs via the GLCT, as illustrated in the following example. 

\textbf{Example:} Consider a simple birth-death model where the recruitment rate ($f$) of new adults at time $t$ is a function of the number of adults $\tau$ time units in the past. Suppose $x$ tracks the number of adults in a population, that there is an assumed maturation time of $\tau$ time units, and that adults die with per-capita rate $g(x)$. Then the dynamics of $x$ could be modeled by the DDE 

\begin{equation} 
\frac{dx}{dt} =\; f(x(t-\tau)) - g(x(t))\,x(t).
\end{equation}

The LCT can be used to approximate the above DDE with an ODE, by replacing the assumption of a fixed time delay (or maturation time, in this example) with an Erlang distributed delay, which can have arbitrarily small variance. This is accomplished by using a shape parameter $k\gg1$ (recall the coefficient of variation is given by $1/\sqrt{k}$), and assuming a rate $r=k/\tau$ which yields the desired Erlang distribution with mean $r/k=\tau$. By the LCT, this introduces $k$ new state variables to the model, which track this delayed quantity, and in this context these new states can be thought of as a sequence of immature stages. Applying the LCT in this manner yields the following ODE approximation of the above DDE, where these immature stages are denoted $w_i$, $i=1,\ldots,k$. \begin{subequations} \label{eq:DDE2ODE}
\begin{align}
\frac{dw_1}{dt} =&\; f(x) - r\,w_1 \\
\frac{dw_j}{dt} =&\; r\,w_{j-1}  - r\,w_j, \quad\; j=2,\ldots,k \\
\frac{dx}{dt} =&\;  r\,w_k - g(x)\,x 
\end{align}\end{subequations}

This approximation can then be used as an intermediate step to derive a phase-type distributed delay model. To do this, simply write the above system of ODEs in matrix form \textit{a la} the GLCT,  \begin{subequations} \label{eq:DDE2PTODE}
\begin{align}
\frac{d\mathbf{w}}{dt} =&\; f(x)\text{\textbf{v}} + \mathbf{M}^\text{T} \mathbf{w} \\
\frac{dx}{dt} =&\;  -\mathbf{1}^\text{T}\mathbf{M}^\text{T}\mathbf{w} - g(x)\,x,
\end{align}\end{subequations}

where the term $r\,w_k$ is instead written using the more generic expression given in Theorem \ref{th:glct} for computing the overall loss rate from the intermediate states $\mathbf{w}$, which in this case is $-\mathbf{1}^\text{T}\mathbf{M}^\text{T}\mathbf{w} = r\,w_k$, and the vector \textbf{v} and matrix \textbf{M} are of the form of eqs. \eqref{eq:vM}, with $r_i=k/\tau$.

The above steps leading to the derivation of eqs. \eqref{eq:DDE2PTODE} show how a DDE can be approximated by changing the fixed delay assumption to an Erlang distributed delay, and then generalized to a phase-type distributed delay. In this more general form, \textbf{v} and \textbf{M} can correspond to any phase-type distribution, not just an Erlang distribution. Furthermore, this assumption could be relaxed even further by allowing entries in \textbf{v} or \textbf{M} to vary over time, or with one or more state variables, as described in \citet{Hurtado2019}.

\section{Results}\label{sec:results}

In the sections below, we illustrate this process of deriving new models using the GLCT by generalizing various biological models taken from the peer reviewed literature. We first highlight some of the key assumptions of these models related to delays and the time spent in different states, viewing each model as a mean field model corresponding to some unspecified stochastic model. We then derive from each model an ODE model that incorporates phase-type distributed delays or dwell times, thereby generalizing or approximating the original model.

\subsection{Model 1: Tumor Growth Inhibition (TGI) Model} \label{sec:TGI} 

\begin{figure}[!tbh]
	\centering\includegraphics[trim = 0 450 430 0, clip, width=0.8\textwidth]{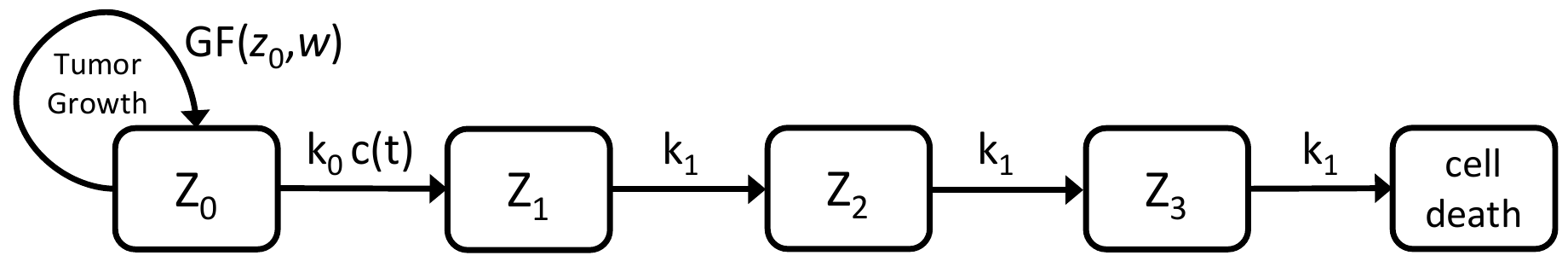}
	\caption{Schematic diagram of the Tumor Growth Inhibition model (TGI) by \citet{Simeoni2004}. See the main text for further details. \label{fig:TGI}}
\end{figure}

\citet{Simeoni2004} introduced a simple model of tumor growth inhibition that employs the standard Linear Chain Trick (LCT) to incorporate an Erlang distributed time to cell death following tumor cell damage from treatment. The model was subsequently analyzed using standard approaches from dynamical systems \citep{Magni2006,Magni2008}, and has been used elsewhere in the study of tumor growth and the development of cancer treatments \citep[e.g.,][]{Rocchetti2009,Simeoni2013}.  The Simeoni model has also been extended to a ``competing Poisson processes" like assumption (compare Fig. 6 in \citet{Hurtado2019} to Fig. 2 in \citet{Rocchetti2009} and Fig. 1 in \citet{Terranova2013}) in order to model tumor cell death arising from the combined effects of two drugs with no pharmacokinetic interaction.

The basic TGI model is given by eqs. \eqref{eq:GF} and \eqref{eq:TGI} below. In the absence of pharmacological treatment, the amount of cycling (replicating) tumor cells at time $t$ ($z_0(t)$) grows according to the overall growth rate\footnote{This growth rate function is an approximation of the piece-wise function that is equal to $\lambda_0 z_0$ when $w<\lambda_0/\lambda_1$, and $\lambda_1 z_0/w$ when $w\geq \lambda_0/\lambda_1$. See \citet{Magni2006} for details.} \begin{equation} \label{eq:GF}
GF(z_0,w) =\; \frac{\lambda_0\,z_0(t)}{\bigg[1 + \bigg(\frac{\lambda_0}{\lambda_1}\,w(t)\bigg)^\psi \bigg]^\frac{1}{\psi}}.
\end{equation}

Treatment begins at time $t_0>0$, and accordingly the effect of that treatment $c(t)=0$ for $0\leq t\leq t_0$. Once treatment begins, cells that are damaged by the treatment then progress through a series of states Z$_i$, $i=1,\ldots,n$, prior to cell death (see Fig. \ref{fig:TGI}). Together, the full model is given by \begin{subequations}\label{eq:TGI} \begin{align}
\frac{dz_0(t)}{dt} =&\; GF(z_0(t),w(t)) - k_0\,c(t)\,z_0(t) \\
\frac{dz_1(t)}{dt} =&\;  k_0\,c(t)\,z_0(t) - k_1\,z_1(t) \\
\frac{dz_i(t)}{dt} =&\;  k_1\,z_{i-1}(t) - k_1\,z_i(t), \quad\; i = 2,\ldots,n \\
w(t) =&\; \sum_{i=0}^n z_i; \;\; z_0(0)=w_0,\;\; z_i(0)=0, \; i=1,\ldots,n
\end{align}
\end{subequations}

where $w$ is the total amount of tumor cells, and $k_0$ and $c(t)\geq0$ determine the rate of initial tumor cell damage from the treatment. Alternatively, from a more mathematical perspective, $k_0$ and $c(t)$ determine the distribution of time spent in the base state Z$_0$, which follows the first event time distribution under a non-homogeneous Poisson process with rate $r(t)=k_0c(t)$ (see \citet{Hurtado2019} for details). Parameters $n$ and $k_1$ are the shape and rate parameters, respectively, for the Erlang distributed time until cell death for the cells damaged by the treatment. The treatment is assumed to have no effect on the time until cell death after the initial damage to the cell.

To extend this model to instead assume a more general phase-type distributed time to cell death, the equations for $z_i$, $i=1,\ldots,n$ in eqs. \eqref{eq:TGI} can be written in matrix form, using Theorem \ref{th:glct}, where \begin{equation}\textnormal{\textbf{v}} = \begin{bmatrix} 1 \\ 0 \\ \vdots \\ 0  \end{bmatrix}  \qquad \text{and} \qquad \mathbf{M} = \begin{bmatrix} -k_1 & k_1 & 0 & \cdots & 0 \\
0 & -k_1 & k_1  & \ddots & 0 \\
\vdots & \ddots & \ddots  & \ddots & \ddots \\
0 & 0 & \ddots &  -k_{1} & k_{1} \\
0 & 0 & \cdots &  0 & -k_1
\end{bmatrix}. \end{equation}

This yields the more compact, and more general, set of equations below, where $\mathbf{x}=[z_1,z_2,\ldots,z_n]^\text{T}$. \begin{equation}\label{eq:TGIglct} \begin{split}
\frac{dz_0(t)}{dt} =&\; GF(z_0(t),w(t)) - k_0\,c(t)\,z_0(t) \\
\frac{d\mathbf{x}(t)}{dt} =&\;  k_0\,c(t)\,z_0(t)\,\text{\textbf{v}} + \mathbf{M}^\text{T}\mathbf{x} \\
w(t) =&\; \sum_{i=0}^n z_i \\
z_0(0)=&w_0,\; z_i(0)=0, \; i=1,\ldots,n.
\end{split}
\end{equation}

Note that eqs. \eqref{eq:TGIglct} generalize the TGI model in the sense that these equations accommodate any phase-type distribution assumption for the time to cell death following the initial effect of treatment, not just the Erlang distribution assumed in the original TGI model. Additionally, this matrix-vector form of the original TGI model (i.e., assuming an Erlang distribution) can still be used with some benefit for both computational and mathematical analyses of the TGI model given by eqs. \eqref{eq:TGI}, where those analyses can take advantage of the matrix-vector form of these more general equations \citep{HurtadoRichards2020}. 

\subsection{Model 2: Perscription Opioid Epidemic Model}\label{sec:opioid}

The prescription opioid epidemic model by \citet{Battista2019} is a system of ordinary differential equations with no explicit time delays, and (implicit) exponentially distributed dwell times in multiple states. The model assumes individuals are in one of four different states: S, P, A, and R. Here $S$ is the size of the susceptible class. These individuals are not using opioids or recovering from addiction. $P$ is the number of prescribed users (those who are prescribed the drugs and using them but have no addiction). $A$ is the number of addicted individuals who can be using either prescribed or ilicit opioids, and $R$ is the number of individuals undergoing treatment to recover from addiction. 

The model describing how individuals transition among these states is given by the following equations, where the dot over each state variable indicates a time derivative. \begin{subequations}\label{eq:opiod}\begin{align}
\dot{S}&= - \alpha S - \beta _A SA - \beta _P SP + \epsilon P + \delta R + \mu (P+R)+\mu^{*}A \\
\dot{P}&= \alpha S - (\epsilon +\gamma +\mu ) P \label{eq:opiodP}\\
\dot{A}&= \gamma P + \sigma R + \beta _A SA + \beta _P SP - (\zeta +\mu ^{*})A \\
\dot{R}&= \zeta A - (\delta +\sigma +\mu ) R. 
\end{align}\end{subequations}

The term $\alpha S$ is the rate of individuals transitioning from the susceptible state to the prescribed state after being prescribed opioids per unit time, $\beta _A S A$ is the rate of those transitioning from state S to state A after interacting with addicted individuals, and similarly $\beta _P SP$ represents the rate of individuals who transition from the susceptible class to the addicted class after exposure to opioids via perscription opiod users who have extra or unsecured drugs. The terms $\epsilon P$ and $\delta R$ are the rate individuals leave the prescribed users class without becoming addicted and then reenter the susceptible class at per-capita rate $\epsilon$, and those who leave the rehabilitation state after treatment and reenter the susceptible state at per-capita rate $\delta$. The rates $\mu P$, $\mu R$ and $\mu^*A$ are the death rates for the prescribed, rehabilitated, and addicted classes (to ensure constant population size, deaths are replaced instantaneously by new susceptible individuals). The term  $\gamma P$ is the rate that individuals leave the prescribed class by becoming addicted to their prescription opioids, $\zeta A$ is the rate at which addicted individuals initiate treatment, and $\sigma R$ is the rate at which individuals who are undergoing treatment reenter the addiction class.

Note that from the model terms described above, we may assume that prescription users remain in the prescribed state for an exponentially distributed amount time \citep{Hurtado2019}. Thus, another way to interpret these model terms is that (focusing on eq. \eqref{eq:opiodP}, for example) the proportion of individuals which leave the prescribed user state and go to the susceptible state is $\frac{\epsilon}{\epsilon+\gamma+\mu}$, and thus the net rate of individual entering the susceptible state from the prescribed state is $\frac{\epsilon}{\epsilon+\gamma+\mu}\,(\epsilon+\gamma+\mu)P=\epsilon\,P$. Similarly, the proportion of individuals who go on to become addicted, and who die, are given by $\frac {\gamma}{\epsilon+\gamma+\mu}$ and $\frac {\mu}{\epsilon+\gamma+\mu}$, respectively.

To generalize this model, we aim to replace the implicit assumption of exponentially distributed dwell times in each state, and replace those with the more general phase-type distributions instead.


If we assume the dwell time distribution for the prescribed user state P is a continuous phase-type distribution parameterized by the $n\times 1$ parameter vector $\bf{v_P}$ and $n\times n$  matrix $\bf{M_P}$, then to total number of individuals in state P is given by the sum of the $n$ sub-states $P_i$, $i=1,\ldots, n$. Let ${\bf{x}}=\left[P_1,P_2, \dots, P_n\right]^{\text{T}}$. Then by the GLCT (Theorem \ref{th:glct}), the mean field equations for our prescribed user sub-states are  \begin{equation} \label{eq:dx} \dot{{\bf{x}}} = {\bf{v_P}}\alpha S + {\bf{M_P}}^{\text{T}} {\bf{x}}.\end{equation}

Observe that if we let $\bf{v_P}$ be a one dimensional row vector with its first and only entry being a 1 and let $\bf{M_P}$ be a $1\times 1$ matrix with the entry $-\left(\epsilon +\gamma +\mu \right)$, we recover eq. \eqref{eq:opiodP}. 

Recall that individuals who leave the prescribed user state either transition to the addicted state, the susceptible state, or they die. We can denote these proportions as $F_{PA}$, $F_{PS},$ and $F_{PD}$, respectively, where $F_{ij}\in[0,1]$ and $F_{PA}+F_{PS}+F_{PD}=1$. Note that in the original model $F_{PA}=\frac{\gamma}{(\epsilon+ \gamma +\mu)}, F_{PS}=\frac{\epsilon}{(\epsilon+ \gamma +\mu)},$ and $F_{PD}=\frac{\mu}{(\epsilon+ \gamma +\mu)}$. Combining the above with eq. \eqref{eq:dx}, this yields \begin{subequations}\label{eq:opiod2}\begin{align}
\dot{S}=&\; - \alpha S - \beta _A SA - \beta _P SP + F_{PS} (-{\bf{M_P 1}})^{\text{T}} {\bf{x}} + \delta R +F_{PD} (-{\bf{M_P 1}})^{\text{T}} {\bf{x}}+\mu R +\mu ^{*}A \\
\dot{{\bf{x}}}=&\; {\bf{v_P}}\alpha S + {\bf{M_P}}^{\text{T}} {\bf{x}} \\
\dot{A}=&\; F_{PA} (-{\bf{M_P 1}})^{\text{T}} {\bf{x}} + \sigma R + \beta _A SA + \beta _P SP - (\zeta +\mu ^{*})A \\
\dot{R}=&\; \zeta A - (\delta +\sigma +\mu ) R.
\end{align}\end{subequations}

Similarly, we can generalize the addicted and rehabilitated states with phase-type dwell time distributions, assuming the respective phase-type distributions are parameterized by $\mathbf{v_A}$, $\mathbf{M_A}$, $\mathbf{v_R}$, and $\mathbf{M_R}$. Let ${\bf{y}}=\left[A_1,A_2, \dots, A_k\right]^{\text{T}}$ denote the $k$ sub-states of $A$, and ${\bf{z}}=\left[R_1,R_2, \dots, R_m\right]^{\text{T}}$ the $m$ sub-states of $R$. This yields the generalized model:  \begin{subequations} \label{eq:opiodglct} \begin{align} \begin{split}
\dot{S}=&\; - \alpha S - \beta _A SA - \beta _P SP + (F_{PS}+F_{PD}) (-{\bf{M_P 1}})^{\text{T}} {\bf{x}} \;+ \\ 
        &\; + (F_{RS}+F_{RD}) (-{\bf{M_R 1}})^{\text{T}}\,{\bf{z}}  + F_{AD} (-{\bf{M_A 1}})^{\text{T}}\,{\bf{y}} \end{split} \\
\dot{{\bf{x}}} =&\; {\bf{v_P}}\alpha S + {\bf{M_P}}^{\text{T}} {\bf{x}} \\
\dot{{\bf{y}}}=&\; {\bf{v_A}}\big(F_{PA} (-{\bf{M_P 1}})^{\text{T}} {\bf{x}} +  F_{RA} (-{\bf{M_R 1}})^{\text{T}}{\bf{z}} + \beta _A SA + \beta _P SP\big) + {\bf{M_A}}^{\text{T}}\,{\bf{y}}\\
\dot{{\bf{z}}}=&\; {\bf{v_R}}(F_{AR} (-{\bf{M_A 1}})^{\text{T}}\,{\bf{y}})+ {\bf{M_R}}^{\text{T}} {\bf{z}}.
\end{align} \end{subequations}

It is worth noting that the original model eqs. \eqref{eq:opiod} are a special case of eqs. \eqref{eq:opiodglct}, as are any intermediate extensions of the original model obtained by applying the standard Linear Chain Trick (LCT) to impose Erlang distributed dwell times on one or more of the four main states.

\subsection{Model 3: Within-Host Model of Immune-Pathogen Interactions} \label{sec:inhost}

In \citet{Hurtado2012}, a specific (adaptive) immune response was added to the innate immune response model introduced by \citet{Reynolds2006}. The scaled version of this within-host model, as stated in \citet{Hurtado2012}, is  \begin{subequations}\label{eq:inhost}\begin{align}
	\frac{dp}{dt} =&\; k_{pg}p(1-p) - \frac{k_m p}{\mu_p+p} -K(y)\,n\,p \label{eq:inhostp} \\
	\frac{dn}{dt} =&\; \frac{n+k_p \,p}{x_n + n+k_p \,p} - \mu_n \, p \\
	\frac{dy_0}{dt} =&\; \frac{(np)^\alpha}{x_y^\alpha + (np)^\alpha} - \mu_{y0}\,y_0 \\
	\frac{dy}{dt} =&\; \mu_{y0}\,y_0  - \mu_y\,y
	\end{align}
\end{subequations}

In this model, $p$ is the scaled pathogen (bacteria) population size, which follows a logistic growth model in the absence of an immune response. The second term in eq. \eqref{eq:inhostp} models the effect of some baseline local immune defenses capable of neutralizing a small population of pathogen, and mathematically introduces a strong Allee effect into the model. The level of innate immune activity $n$ increases in response to the presence of pathogen, as well as from a positive feedback loop, and the interaction of this innate immune activity and pathogen stimulates progenitor cells ($y_0$) that mature into active specific immune components ($y$), e.g., B-cells, which augment the pathogen-killing capacity of the innate immune components (i.e., which increase $K(y)$).  For further details on this model, see \citet{Hurtado2012} and \citet{Reynolds2006}.

In this model, the delay in activating the specific immune response can be thought of as an exponentially distributed maturation time (with mean $1/\mu_{y0}$) and the duration of the active immune response (i.e., the dwell time of mature specific immune components modeled by $y$) is also exponentially distributed (with mean $1/\mu_y$).

Both of these dwell time distribution assumptions can be replaced by phase-type distributions with respective parameters $\mathbf{v_{y0}}$, $\mathbf{M_{y0}}$, $\mathbf{v_{y}}$ and $\mathbf{M_y}$, respectively.  To do this, we first partition state Y$_0$ into sub-states X$_i$, $i=1,\ldots,m$, and the state Y into sub-states Z$_j$, $j=1,\ldots,n$, where $y_0=\sum_{i=1}^m x_i$ and $y=\sum_{j=1}^n z_j$, and we let  $\mathbf{x}=[x_1,\ldots,x_m]^\text{T}$ and $\mathbf{z}=[z_1,\ldots,z_n]^\text{T}$. The GLCT (Theorem \ref{th:glct}) then yields the more general model \begin{subequations}\label{eq:inhostglct}\begin{align}
	\frac{dp}{dt} =&\; k_{pg}p(1-p) - \frac{k_m p}{\mu_p+p} -K(y)\,n\,p  \\
	\frac{dn}{dt} =&\; \frac{n+k_p \,p}{x_n + n+k_p \,p} - \mu_n \, p \\
	\frac{d\mathbf{x}}{dt} =&\; \mathbf{v_{y0}}\frac{(np)^\alpha}{x_y^\alpha + (np)^\alpha} + \mathbf{M_{y0}}^\text{T}\,\mathbf{x} \\
	\frac{d\mathbf{z}}{dt} =&\; -\mathbf{1}^\text{T}\mathbf{M_y}^\text{T}\mathbf{v_y} + \mathbf{M_y}^\text{T}\mathbf{z}.
	\end{align}
\end{subequations}

Note that the above model could be similarly be extended to include a phase-type distributed time lag in the activation of the non-specific immune response (modeled by $n$). However, given the relatively fast activation time of this response, we have omitted this extension in the above model.

\subsection{Model 4: Cell-To-Cell Spread of HIV}\label{sec:HIV}

\citet{Culshaw2003} introduced an integro-differential model for the cell-to-cell spread of HIV, which incorporates a distributed time delay in the time between cells becoming infected and infectious. They then derive from this general model multiple other models which differ only in the specific assumptions on the form of this delay distribution.

In their most general model, state variable $C(t)$ represents the concentration of healthy cells at time $t$, and $I(t)$ is the concentration of infected cells. The model is as follows: \begin{subequations} \label{eq:CIide}\begin{align}
	\frac {dC}{dt} =&\;  r_C C(t) \left(1-\frac{C(t)+I(t)}{C_M}\right) -k_IC(t)I(t) \\
	\frac {dI}{dt} =&\;  k'_i\int_{-\infty}^{t} C(u)I(u)F(t-u)du-\mu_{I}I(t). 
	\end{align} \end{subequations}

Parameter $r_C$ is the net growth rate of the healthy cell population, $C_M$ is an effective carrying capacity of the system, $k_I$ is an infection rate parameter, $k_I'/k_I$ is the fraction of cells surviving the incubation period, and $\mu_I$ is the per capita death rate of infected cells (implicitly, the infected cell lifetime is exponentially distributed with mean $1/\mu_I$). Initial values for $C$ and $I$ must be functions defined over all $s\in (-\infty,0]$ and are denoted $\phi(s)\geq 0$ and $\psi(s)\geq0$, respectively.

In \citet{Culshaw2003}, the delay kernel $F(u)$ is assumed to be of the form \begin{equation} \label{eq:CIF} F(u) =\; \frac{\alpha^{n+1}u^n}{n!}e^{-\alpha\,u} \end{equation} which is just the density function for an Erlang distribution with rate $\alpha$ and shape $n+1$ (and thus, mean $(n+1)/\alpha$ and coefficient of variation $1/\sqrt{n+1}$), and the \textit{weak} and \textit{strong} kernels are just the particular cases where the shape parameter is 1 (i.e., an exponential distribution with rate $\alpha$) or 2 (Erlang with rate $\alpha$ and shape 2), respectively.

Three models are then derived in \citet{Culshaw2003} from this more general integro-differential equation model, which we will summarize here then extend further using the LCT and GLCT (see \citet{Culshaw2003} for a comparison of the dynamics of these three models).

First, assuming $F(u)=\delta(u)$ is a Dirac delta function (point mass at zero) yields the model with no delay, given in \citet{Culshaw2003} by equations  \begin{equation}\begin{split}
\frac {dC}{dt} =&\;  r_C C(t) \left(1-\frac{C(t)+I(t)}{C_M}\right) -k_IC(t)I(t) \\
\frac {dI}{dt} =&\;  k'_i C(t)I(t) - \mu_{I}I(t),
\end{split} \end{equation}

with initial conditions $C(0)=c_0\geq 0$ and $I(0)=I_0\geq 0$.

Second, assuming $F(u)=\delta(u-\tau)$ is a Dirac delta function at time $\tau>0$ yields the delay differential equation below (as written in \citet{Culshaw2003}) with the same initial conditions as eqs. \eqref{eq:CIide}. \begin{equation}\begin{split}
\frac {dC}{dt} =&\;  r_C C(t) \left(1-\frac{C(t)+I(t)}{C_M}\right) -k_IC(t)I(t) \\
\frac {dI}{dt} =&\;  k'_i C(t-\tau)I(t-\tau) - \mu_{I}I(t). 
\end{split} \end{equation}

Third, assuming a ``weak kernel" (i.e., exponentially distributed delay with rate $\alpha$) yields \begin{equation}\label{eq:CXI}\begin{split}
\frac{dC}{dt} =&\;  r_C C(t) \left(1-\frac{C(t)+I(t)}{C_M}\right) -k_IC(t)I(t) \\
\frac{dX}{dt} =&\;  \alpha\,C(t)I(t) - \alpha\,X(t) \\
\frac{dI}{dt} =&\;  k'_I\,X(t) - \mu_{I}I(t). 
\end{split} \end{equation}

Observe that, by making the simple substitution $Y(t)=\frac{k_I}{\alpha}X(t)$, we can write the following alternative model which is equivalent to eqs. \eqref{eq:CXI} by Culshaw \textit{et al}, but is more natural in terms of the units of $X$ and $Y$ and in the context of the LCT: \begin{equation} \label{eq:CYI} \begin{split}
\frac{dC}{dt} =&\;  r_C C(t) \left(1-\frac{C(t)+I(t)}{C_M}\right) -k_IC(t)I(t) \\
\frac{dY}{dt} =&\;  k_I\,C(t)I(t) - \alpha\,Y(t) \\
\frac{dI}{dt} =&\;   \frac{k'_I}{k_I} \alpha Y(t) - \mu_{I}I(t). 
\end{split} \end{equation}

From eqs. \eqref{eq:CYI}, it is straightforward to derive two additional models using the LCT and GLCT. Using the LCT, eqs. \eqref{eq:CIlct} below are a more general form of eqs. \eqref{eq:CYI} that correspond to any choice of a non-negative integer value of $n$ and $\alpha>0$ for the delay kernel $F$ in eq. \eqref{eq:CIF} (i.e., any Erlang distribution with shape $n+1$ and rate $\alpha$).  \begin{equation} \label{eq:CIlct} \begin{split}
\frac{dC}{dt} =&\;  r_C C(t) \left(1-\frac{C(t)+I(t)}{C_M}\right) -k_IC(t)I(t) \\
\frac{dY_1}{dt} =&\;  k_I\,C(t)I(t) - \alpha\,Y_1(t) \\
\frac{dY_i}{dt} =&\;  \alpha\,Y_{i-1}(t) - \alpha\,Y_i(t), \quad\; i=2,\ldots,n+1 \\
&\vdots \\
\frac{dI}{dt} =&\;   \frac{k'_I}{k_I} \alpha Y_{n+1}(t) - \mu_{I}I(t). 
\end{split} \end{equation}

Using the above equations as a guide, re-writing them in matrix form as suggested by the GLCT (Theorem \ref{th:glct}), yields the more general set of equations below, which are the desired set of model equations for which the Erlang distribution assumption (with parameters $n+1$ and $\alpha$) has been replaced by a phase-type distribution parameterized by the length $k$ vector \textbf{v} and $k\times k$ matrix \textbf{M}, where $\mathbf{y}=[Y_1,\ldots,Y_k]^\text{T}$. \begin{equation} \label{eq:CIglct} \begin{split}
\frac{dC}{dt} =&\;  r_C C(t) \left(1-\frac{C(t)+I(t)}{C_M}\right) -k_IC(t)I(t) \\
\frac{d\mathbf{y}}{dt} =&\;  k_I\,C(t)I(t)\,\mathbf{v} + \mathbf{M}^\text{T}\,\mathbf{y}\\
\frac{dI}{dt} =&\;  -\frac{k'_I}{k_I}\mathbf{1}^\text{T}\,\mathbf{M}^\text{T}\,\mathbf{y} - \mu_{I}I(t). 
\end{split} \end{equation}

Further generalizations, e.g., to time-varying \textbf{v}, \textbf{M}, or survival fraction $f=k'_I/k_I$, are also possible \citep{Hurtado2019}.

\section{Discussion}\label{sec:discussion}

Mean field state transition models, written as ordinary differential equations (ODEs), are widely used throughout the sciences, but too often they include overly simplistic assumptions regarding time delays and the duration of time individual entities spend in specific states. In this paper, we illustrate how those assumptions can be refined within the context of the GLCT, and used to derive new models. We have derived multiple new dynamical systems models based on existing models taken from the literature, to illustrate the relative ease of deriving such models using the GLCT. These examples span a range of biological application areas and different model types (DDEs, ODEs, and integro-differential equations), which reflect a mix of different implicit and explicit delay assumptions. Using the GLCT, those delay assumptions were replaced or generalized to yield new ODEs that incorporate phase-type distributed delays and dwell times.

These straightforward generalizations illustrate how modelers can incorporate phase-type distributed delays and dwell times into ODE models, and how those underlying (implicit) stochastic model assumptions are reflected in the corresponding mean field ODE model structure. Importantly, these alternative model formulations can also be used in the computational and mathematical analysis of models that only assume Erlang distributions and could otherwise be derived using the standard linear chain trick (LCT). For an example, see \citet{HurtadoRichards2020}, where we illustrate the potential computational benefits of using a GLCT formulation of models with Erlang dwell time assumptions when computing numerical solutions to such models. 

These generalized models also lay the groundwork for incorporating Coxian, hyperexponential, hypoexponential (i.e, generalized Erlang) and other phase-type distributions into these and similar mean field ODE models. Statistical tools such as \texttt{BuTools} \citep{BuTools2,BuToolswww} allow modelers to fit phase-type distributions to data, thereby allowing modelers to build approximate empirical distributions into ODE models using the GLCT. However, it is important to note that there are certain limitations to approximating some delay or dwell time assumptions with phase-type distributions. For example, delay distributions with compact support (e.g., a continuous uniform distribution) may not be well approximated by phase-type distributions. 

In closing, we hope these new techniques prove to be helpful to modelers in their efforts to build better models, to check the consequences of certain simplifying assumptions, and to gain better intuition for how underlying assumptions are reflected in the structure of ODE model equations.


\subsection*{Acknowledgements}
The authors thank Dr. Deena Schmidt for conversations and comments that improved this manuscript. The authors also thank the organizers and sponsors of the Second International Conference on Applications of Mathematics to Nonlinear Sciences (AMNS-2019), held June 27-30, 2019, in Pokhara, Nepal, and the editors of this Thematic Issue.

\subsection*{Funding}
This work was supported by a grant awarded to PJH by the Sloan Scholars Mentoring Network of the Social Science Research Council with funds provided by the Alfred P. Sloan Foundation; and this material is based upon work supported by the National Science Foundation under Grant No. DEB-1929522.

\subsection*{Disclosure statement}
The authors declare that they have no conflict of interest.

%
%
%

\printbibliography

\end{document}